\theoremstyle{remark}
\newtheorem{proposition}{Proposition}
\newtheorem{remark}{Remark}
\newtheorem{definition}{Definition}
\newtheorem{theorem}{Theorem}
\newtheorem{example}{Example}
\title{Scalar Solvability of Network Computation Problems and Representable Matroids}
\begin{document}

\author{
\IEEEauthorblockN{Anindya Gupta and B. Sundar Rajan}
\IEEEauthorblockA{Department of Electrical Communication Engineering, Indian Institute of Science, Bengaluru 560012, KA, India\\Email:~\{anindya.g, bsrajan\}@ece.iisc.ernet.in}
}

\maketitle
\begin{abstract}
We consider the following \textit{network computation problem}. In an acyclic network, there are multiple source nodes, each generating multiple messages, and there are multiple sink nodes, each demanding a function of the source messages. The network coding problem corresponds to the case in which every demand function is equal to some source message, i.e., each sink demands some source message. Connections between network coding problems and matroids have been well studied. In this work, we establish a relation between network computation problems and representable matroids. 
We show that a network computation problem in which the sinks demand linear functions of source messages admits a scalar linear solution if and only if it is matroidal with respect to a representable matroid whose representation fulfills certain constraints dictated by the network computation problem. Next, we obtain a connection between network computation problems and functional dependency relations (FD-relations) and show that FD-relations can be used to characterize network computation problem with arbitrary (not necessarily linear) function demands as well as nonlinear network codes.
\end{abstract} 

\section{Introduction}
Conventional communication networks, like the Internet, ensure transfer of information generated at some nodes to others. It is known that network coding affords throughput gain over routing in such networks (see \cite{Ahls,Yeung,Ho} and references therein), and given a network and the demanded source messages at each sink, the network coding problem is to design a network code that maximizes the rate of information transfer from the source nodes to the sinks. But in some networks, like a sensor networks for environmental monitoring, nodes may be interested not in the messages generated by some other nodes but in one or more functions of these messages. Designing a network code that maximizes the frequency of target functions computation, called the \textit{computing capacity}, per network use at the sinks is known as the \textit{network computation} problem \cite{Appu}. This subsumes the network coding problem as a special case. Environmental monitoring in an industrial unit is an application of network computation where relevant parameter may include temperature and level of exhaust gases which may assist in preventing fire and poisoning due to toxic gases respectively. 

A simple way to perform network computation is to communicate all the messages relevant to the function required at each sink using either network coding or routing. This is not only highly inefficient in terms of bandwidth usage and power consumption but also undesirable in certain settings. For example, in an election, who voted whom is to be kept confidential but the sum total of votes received by each candidate is to be publicized. An efficient way is that function computation be performed \textit{in-network}, i.e., in a distributed manner. The intermediate nodes on the paths between the sources and the sinks perform network coding and communicate coded messages such that the sinks may compute their desired functions without having to know the value of the arguments. 

In \cite{GiriKr}, bounds on rate of computing symmetric functions (invariant to argument permutations) of data collected by sensors in a wireless sensor network at a sink node were presented. The notion of min-cut bound for the network coding problem \cite{Yeung} was extended to the function computation problem in a directed acyclic network with multiple sources and one sink in \cite{Appu}. The case of directed acyclic network with multiple sources, multiple sinks and each sink demanding the sum of source messages was studied in \cite{Dey}; such a network is called a sum-network. Relation between linear solvability of multiple-unicast networks and sum-networks was established. Furthermore, insufficiency of scalar and vector linear network codes to achieve computing capacity for sum-networks was shown. Coding schemes for computation of arbitrary functions in directed acyclic network with multiple sources, multiple sinks and each sink demanding a function of source messages were presented in \cite{Dey2}. In \cite{Appu2}, routing capacity, linear coding capacity and nonlinear coding capacity for function computation in a multiple source single sink directed acyclic network were compared and depending upon the demanded functions and alphabet (field or ring), advantage of linear network coding over routing and nonlinear network coding over linear network coding was shown.

Connections between matroids and network coding problems were first explored in \cite{Dough} wherein matroidal networks were characterized and a construction method to obtain matroidal networks from matroids was given. It was shown in \cite{Dough} that if a network admits a scalar linear solution, then the network is matroidal with respect to represenatable matroid. The converse, i.e., if a network is matroidal with respect to a representable matroid, then it admits a scalar linear solution was given in \cite{KimMed}. 
The construction procedure to obtain network from matroids given in \cite{Dough} reflects all the independencies but not all the dependencies of the matroids. This problem was addressed in\cite[Sec.~VI]{ICRel} wherein a method to construct a network from a matroid was given via an index coding problem; the resultant network reflects all the dependencies and independencies of the matroids and it was shown that a scalar (vector) linear solution exists for the network if and only if the matroid has a linear (multilinear) representation. Also in \cite{ICRel} and \cite{ICRel2}, relation between network and index coding was studied. Similar relation between network computation problems and functional index coding problems (a generalization of index coding problem proposed in \cite{FICP}) was established in \cite{FNC-FIC}.

\subsection{Contributions and Organization}
In this paper, we explore the relationship of network computation problems and matroid representations. The contributions of this paper are as follows:
\begin{enumerate}
\item In Section~\ref{sec_main}, we give a modified definition of matroidal networks to fit the requirements of network computation problems; the modified definition subsumes the original definition of \cite{Dough} as a special case (Remark~\ref{rem_2}).  
\item In Theorem~\ref{thm1}, we show that a scalar linear network code for a given network computation problem with linear functions demanded by sinks exists if and only if the network is matroidal with respect to a representable matroid whose representation satisfies certain constraints imposed by the network computation problem. This generalizes \cite[Th.~12]{KimMed} which states that a network coding problem admits a scalar linear solution if and only if it is matroidal with respect to a representable matroid. 
\item In Section~\ref{sec_fd}, we show connection between \textit{functional dependency relations} (FD-relations) and network computation problems with possibly nonlinear function demands. In Proposition~\ref{prop_1}, we show that a functional representation of an FD-relation (determined by the network computation problem) can be used to obtain nonlinear network codes. This generalizes \cite[Proposition~12]{FDNC} which states that a network coding problem admits a scalar solution if and only if the corresponding FD-relation has a functional representation. 
\end{enumerate}
In Section~\ref{sec_prelims}, relevant preliminaries of network computation problem and matroid theory are given. We conclude the paper with a summary of work presented in Section~\ref{sec_disc}.

\section{Network Model} \label{sec_prelims}
A brief overview of network computation problems and matroids are presented in this section. A $q$-ary finite field is denoted by $\mathbb{F}_q$ and the set $\{1,2,\ldots,n\}$ is denoted by $[n]$, for some positive integer $n$. The power set of a set $S$ is denoted by $2^S$. The column vector of length $N$ whose $n$th component is one and all other components are zeros is denoted as $\epsilon_{n,N}$. An $N\times N$ identity matrix is denoted by $\mathrm{I}_{N\times N}$ and an $m\times n$ all zero matrix is denoted by $\mathbf{0}_{m\times n}$.

\subsection{Network Computation}
A network is represented by a finite directed acyclic graph $\mathcal{N}=(V,\mathcal{E})$, where $V$ is the set of nodes and $\mathcal{E}=\tilde{E}\cup E\cup \hat{E}$ is the set of directed error-free links (edges), where the edges in $\tilde{E}$ correspond to the source messages generated in the network, the edges in $E$ correspond to the links between the nodes in the network, and the edges in $\hat{E}$ correspond to the demands of the sink nodes. For a node $w\in V$, $In(w)$ is the set of messages generated (if any) by node $w$ and the incoming links to it from other nodes, and $Out(w)$ is the set of outgoing links to other nodes and the function demanded (if any) by $w$. For an edge $e=(u,v)\in E$ from a node $u$ to $v$, $u$ and $v$ are called, respectively, its tail and head and $In(e)=In(u)$, i.e., $In(e)$ is the set of edges which terminate at the node at which $e$ originates. The network may have multiple source nodes and each may generate multiple messages. The source messages are represented by tailless edges $\tilde{e}_k\in \tilde{E}$ that terminate at a source node. The total number of messages generated in the network is $K=|\tilde{E}|$ and are denoted by random variables $X_1,X_2,\ldots,X_K$, where, for every $k\in [K]$, $X_k$ is uniformly distributed over $\mathbb{F}_q$. Let $X=(X_1,\ldots,X_K)$ be the row vector of source messages and $\mathcal{X}=\{X_1,\ldots,X_K\}$ be the set of source messages. 
Each link $e\in E$ can carry one $\mathbb{F}_q$ element per use, i.e., capacity of each link is $1$, and $Y_e$ is the associated random variable. Note that for a source edge $\tilde{e}_k\in \tilde{E}$, the associated random variable is $Y_{\tilde{e}_k}=X_k$. The set of sink nodes is denoted by ${T}$. Without loss of generality, we assume that each sink demands only one function of source messages. If a sink demands $N\,(>1)$ functions, then such a sink may be replaced by $N$ sinks, each demanding one function but receiving the same incoming information. A sink node $t$ requests a function $g_t(X)$, where $g_{t}:\mathbb{F}_q^{K}\rightarrow \mathbb{F}_q$. 
The demands $g_t(X)$ of the sink node $t$ is denoted by a headless edge $\hat{e}_t\in \hat{E}$ originating at $t$. Let $G_T=\{g_t:t\in T\}$.

A network computation problem $\mathcal{F}(\mathcal{N}(V,\mathcal{E}),\mathcal{X},G_T)$ is specified by the underlying network, the message set, and the set of sink demands.

A network code $\{F_e:e\in E\}\cup \{D_t:t\in {T}\}$ for a network computation problem $\mathcal{F}$ is an assignment of a \textit{global encoding kernel} $F_e:\mathbb{F}_q^K\rightarrow \mathbb{F}_q$ to each edge $e\in E$ and a decoding function $D_t:\mathbb{F}_q^{|In(t)|}\rightarrow \mathbb{F}_q$ to each sink $t\in {T}$. For any edge $e\in E$, $F_e$ maps $X$ to $Y_e$ (and thus, the distribution of $Y_e$ depends upon the network code), i.e., $Y_e=F_e(X)$ is the data that flows on edge $e$. For any sink $t\in {T}$, the decoding map $D_t$ takes as input the data on the incoming edges, $(Y_{e'})_{e'\in In(t)})$, and outputs $g_t(X)$, i.e.,
\begin{align}\label{eq_dec_y}
D_t\left( (Y_{e'})_{e'\in In(t)} \right) = D_t\left( (F_{e'}(X))_{e'\in In(t)} \right) = g_t(X).
\end{align}

For every tailless edge $\tilde{e}_k\in\tilde{E}$ denoting the source message $X_k$, $F_{\tilde{e}_k}(X)=X_k$ is taken to be the global encoding kernel. For every headless edge $\hat{e}_t\in \hat{E}$ denoting the demand $g_t(X)$ of sink $t\in T$, $F_{\hat{e}_t}(X)=g_t(X)$ is taken to be the global encoding kernel. 

\begin{remark} \label{rem_1} \begin{enumerate}[(a)]
\item A code is said to be linear if all the global encoding kernels of edges in $E$ are linear, i.e., data on the outgoing edges of each node is a linear combination of data on the incoming edges. Also, the global encoding kernels of the outgoing edges of each node is a linear combination of the global encoding kernels of the incoming edges of that node. Moreover, the global encoding kernel for an edge $e\in \mathcal{E}$ can be represented by a length $K$ column vector $F_e$ over $\mathbb{F}_q$ such that $Y_e=F_e(X)=X\cdot F_e$ for all $e\in E$, where $F_e$ is called the global encoding vector of $e$.
\item Similarly, the global encoding kernel of edge $\tilde{e}_k$, $k\in [K]$, can be represented by $\epsilon_{k,K}$ so that $F_{\tilde{e}_k}(X)=X\cdot F_{\tilde{e}_k}=X\cdot\epsilon_{k,K}=X_k$. Note that the matrix obtained by juxtaposing global encoding vectors of $\tilde{e}_1,\ldots,\tilde{e}_K$ is the identity matrix $\mathrm{I}_{K\times K}$.
\item If the sink demands are linear then they can be represented by $K$ length columns vectors $g_t$ such that $g_t(X)=X\cdot g_t$.
\item If all the sink demands are linear and the network computation problem admits a linear solution, then all the decoding maps will also be linear. For a sink $t$, the decoding map can be represented using a length $|In(t)|$ column vector $D_t$ such that $g_t(X)=X\cdot g_t=D_t\left( (Y_{e'})_{e'\in In(t)} \right)= ( Y_{e'} )_{e'\in In(t)}\cdot D_t=X\cdot (F_{e'})_{e'\in In(t)}\cdot D_t$.
\end{enumerate}
\end{remark}

Another way to specify a network code is to list the \textit{local encoding kernel} $f_e:\mathbb{F}_q^{|In(e)|}\rightarrow \mathbb{F}_q$ of each edge $e\in E$ and the decoding functions of the sinks. That is, a network code $\{f_e:e\in E\}\cup \{D_t:t\in {T}\}$ for a network computation problem $\mathcal{F}$ is an assignment of a local encoding kernel $f_e$ to each edge $e\in E$ and a decoding function $D_t$ to each sink $t\in {T}$. For any $e\in E$, $f_e$ takes in $(Y_{e'})_{e'\in In(e)}$ as input argument and outputs $Y_e$, i.e.,
\begin{align*}
f_e:(Y_{e'})_{e'\in In(e)}\longmapsto Y_e.
\end{align*}
Given the local encoding kernels, the global encoding kernels for each edge can be defined by induction on an ancestral ordering of the edges in the graph as follows (such an ordering always exists for acyclic graphs). For every tailless edge $\tilde{e}_k\in\tilde{E}$ denoting the source message $X_k$, take $F_{\tilde{e}_k}(X)=X_k$ to be the global encoding kernel. Then, for any edge $e\in E$, 
\begin{align} \label{eq_f_2_F}
F_e(X)=f_e\left((F_{e'}(X))_{e'\in In(e)}\right).
\end{align}

\subsection{Matroids}
Now we review relevant concepts of matroid theory. Comprehensive treatment of matroid theory can be found in \cite{Oxley,Welsh}. 

A matroid $\mathcal{M}({S},r)$ comprises a \textit{ground set} ${S}$ and a \textit{rank function} $r:2^{S}\rightarrow \mathbb{Z}$ that satisfies the following axioms:
\begin{enumerate}[(R1)]
\item $0\leqslant r(A) \leqslant |A|$ for all $A\subseteq S$;
\item $r(A)\leqslant r(B)$ for all $A\subseteq B\subseteq S$;
\item $r(A\cup B)+r(A\cap B)\leqslant r(A)+r(B)$ for all $A,B\subseteq S$.
\end{enumerate}%

The rank $r_{\mathcal{M}}$ of the matroid is $r_{\mathcal{M}}=r(S)$. A subset $A$ of the ground set is said to be \textit{independent} if $r(A)=|A|$, if not then it is said to be \textit{dependent}. A subset $B$ of $S$ is said to be a \textit{basis} if $r(B)=|B|=r_{\mathcal{M}}$. A basis is a maximal independent set, i.e., adding one more element will make it dependent.

Let $M$ be an $m\times n$ matrix over $\mathbb{F}_q$ whose columns are denoted by $M_i$, $i=1,2,\ldots,n$. Let $S=\{1,2,\ldots,n\}$ and let $r:2^S\rightarrow\mathbb{Z}$ be a function such that, for every $A\subseteq S$, $r(A)$ is the rank of the $m\times |A|$ submatrix of $M$ with columns indexed by $A$, i.e., if $A=\{i_1,i_2,\ldots,i_{|A|}\}\subseteq S$, then $r(A)=\mathrm{rank}[M_{i_1},M_{i_2},\cdots,M_{i_{|A|}}]$. In other words, a subset $A$ of $S$ is independent if and only if the columns indexed by it are linearly independent. Hence, $r_{\mathcal{M}}=\mathrm{rank}(M)$. The matroid defined above is called the \textit{vector matroid of $M$} and is denoted by $\mathcal{M}(M)$; the ground set of this matroid is the set of column indices $[n]$ of the matrix $M$. 

A matroid $\mathcal{M}(S,r)$ is said to be \textit{representable over $\mathbb{F}_q$} if there exist column vectors $v_1,v_2,\ldots,v_{|S|}$ over $\mathbb{F}_q$ and a bijective map $\phi:S\rightarrow V$ that preserves rank, i.e., for all $A=\{s_1,s_2,\ldots,s_{|A|}\}\subseteq S$, $\mathrm{rank}[v_{s_1},v_{s_2},\cdots,v_{s_{|A|}}]=r(A)$. The set of vectors $v_1,v_2,\ldots,v_{|S|}$ is said to form a representation of $\mathcal{M}(S,r)$ and can be described by a matrix obtained by juxtaposing $v_1,v_2,\ldots,v_{|S|}$. Thus, matroid representation can be seen as assignment of vectors to the matroid ground set elements such that a subset of vectors is independent if and only if the subset of ground set elements they correspond is an independent set in the matroid. A matroid can have several representations over a field. Note that the length $d$ of the column vectors forming a representation of a matroid $\mathcal{M}$ over $\mathbb{F}_q$ must be at least $r_{\mathcal{M}}$, so that at least $r_{\mathcal{M}}$ independent vectors exist in $\mathbb{F}_q^{d}$.

\begin{example}
Consider a matroid on the ground set $S=\{1,2,3\}$ and rank function $r=\min\{|A|,2\}$ for all $A\subseteq S$, i.e., any subset of cardinality at most two is independent. Representations $M_2$ and $M_3$ of $\mathcal{M}(S,r)$ over $\mathbb{F}_2$ and $\mathbb{F}_3$ respectively are given below.

\vspace{-10pt}
\begin{footnotesize}
\begin{align*}
M_2= \bordermatrix{&1&2&3\cr
                & 0 & 1 & 1\cr
                & 1 & 1 & 0}
\quad                
M_3= \bordermatrix{&1&2&3\cr
                & 1 & 0 & 2\cr
                & 0 & 1 & 2\cr
                & 1 & 1 & 1\cr
                & 2 & 2 & 2}.
\end{align*}
\end{footnotesize}
\vspace{-8pt}

A matroid on $n$ elements ground set $S$ and rank function $r(A)=\min\{|A|,k\}$ for all $A\subseteq S$ and some $k\leqslant n$ is called a \textit{uniform matroid} and is denoted by $U_{k,n}$. The matroid given above is $U_{2,3}$.
\hfill $\square$
\end{example}

\section{Network Computation Problems and Matroids} \label{sec_main}
We now define matroidal networks in the context of network computation problem and prove the main result that relates scalar linear solutions for network computation problems and matroid representation.

\begin{definition} \label{def_matnet}
Let $\mathcal{F}(\mathcal{N}(V,\mathcal{E}),\mathcal{X},G_T)$ be a network computation problem with $K$ source messages and $\mathcal{E}=\tilde{E}\cup E\cup \hat{E}$. Let $\mathcal{M}(S,r)$ be a matroid. Then, the network $\mathcal{N}$ is \textit{matroidal} with respect to the matroid $\mathcal{M}$ if there exists a map $f:\mathcal{X}\cup E\cup G_T\rightarrow S$, called the \textit{network-matroid map}, from the set of edges to the ground set of the matroid that satisfies the following conditions:
\begin{enumerate}[(M1)]
\item $f$ is one-to-one on $\mathcal{X}$;
\item $f(\mathcal{X})$ is independent;
\item $r \left( f(\,In(v)\,) \right) = r \left( f(\,In(v)\cup Out(v)\,) \right)$ $\forall v\in V$.
\end{enumerate}
\end{definition}

Condition (M1) ensures that messages are assigned different matroid ground set elements and condition (M2) ensures that these messages correspond to an independent set. Condition (M3) ensures that the outgoing edges of every node in the network are dependent on the incoming edges of the node.

\begin{remark} \label{rem_2}
Note that in a network coding problem, $\mathcal{X}\cup G_T=\mathcal{X}$ since each sink demands some message from the set $\mathcal{X}$. The network-matroid map in this case simplifies to $f:\mathcal{X}\cup E\rightarrow S$. This is same as \cite[Definition~V.1]{Dough}, and thus is a special case of Definition~\ref{def_matnet} given above. 
\end{remark}

We now present the main result connecting scalar linear solutions of a network computation problem to representable matroids. 

\begin{theorem} \label{thm1}
Let $\mathcal{F}(\mathcal{N}(V,\mathcal{E}),\mathcal{X},G_T)$ be a network computation problem with $K$ source messages, $\mathcal{E}=\tilde{E}\cup E\cup \hat{E}$, and each sink demands a linear combination of source messages. Then, $\mathcal{F}$ admits a scalar linear solution over $\mathbb{F}_q$ if and only if $\mathcal{N}$ is matroidal with respect to a matroid $\mathcal{M}$ which is representable over $\mathbb{F}_q$ and at least one of its representations $M\in \mathbb{F}_q^{m\times n}$, $m\geqslant K$ and $n\geqslant m$, satisfies the following constraints:
\begin{enumerate}[(C1)]
\item $M$ contains an $m\times K$ submatrix of the form $\begin{bsmallmatrix} \mathrm{I}_{K\times K}\\ \mathbf{0}_{m-K\times K} \end{bsmallmatrix}$;
\item $M$ contains an $m\times |T|$ submatrix of the form $\begin{bsmallmatrix}g_{t_1}\,g_{t_2}\,\ldots\,g_{t_{|T|}}\\ \mathbf{0}_{m-K\times |T|}\end{bsmallmatrix}$.
\end{enumerate}
\end{theorem}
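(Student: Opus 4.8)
The plan is to prove both directions of the equivalence by directly connecting a scalar linear network code to a matrix representation, mimicking the structure of the proof of \cite[Th.~12]{KimMed} but carrying along the extra bookkeeping forced by the (possibly nontrivial) linear demand functions.

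\textbf{Necessity.} Suppose $\mathcal{F}$ admits a scalar linear solution over $\mathbb{F}_q$. Then every edge $e\in\mathcal{E}$ carries a global encoding vector $F_e\in\mathbb{F}_q^K$, with $F_{\tilde{e}_k}=\epsilon_{k,K}$ for source edges and $F_{\hat{e}_t}=g_t$ for demand edges, by Remark~\ref{rem_1}. First I would form the $K\times n$ matrix $M_0$ (with $n=|\mathcal{X}|+|E|+|G_T|=K+|E|+|T|$) whose columns are the $F_e$, ordered so that the first $K$ columns are $\epsilon_{1,K},\ldots,\epsilon_{K,K}$ and the last $|T|$ columns are $g_{t_1},\ldots,g_{t_{|T|}}$. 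I would then let $\mathcal{M}=\mathcal{M}(M_0)$ be its vector matroid with $f$ sending each element of $\mathcal{X}\cup E\cup G_T$ to its column index. Conditions (M1) and (M2) are immediate since the source columns form $\mathrm{I}_{K\times K}$. For (M3) I would use the defining property of a linear code: for each node $v$, every column $F_e$ with $e\in Out(v)$ is an $\mathbb{F}_q$-linear combination of the columns $\{F_{e'}:e'\in In(v)\}$ (this holds for ordinary edges by the local-to-global relation \eqref{eq_f_2_F} specialized to linear kernels, for demand edges by \eqref{eq_dec_y} with a linear decoder as in Remark~\ref{rem_1}(d), and trivially for source edges), so $\mathrm{span}(f(In(v)))=\mathrm{span}(f(In(v)\cup Out(v)))$ and the ranks agree. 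Finally, since $M_0$ is already $K$ rows tall, I can pad it with $m-K$ zero rows for any $m\ge K$ to get a representation $M$ of the same matroid satisfying (C1) and (C2) verbatim; if one insists $m\ge K$ with equality allowed, take $m=K$ and the zero-padding is empty.

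\textbf{Sufficiency.} Conversely, suppose $\mathcal{N}$ is matroidal with respect to a matroid $\mathcal{M}$ representable by $M\in\mathbb{F}_q^{m\times n}$ satisfying (C1) and (C2), with network-matroid map $f$. I would define, for each edge $e\in\mathcal{X}\cup E\cup G_T$, the length-$m$ vector $v_e:=M_{f(e)}$ (the $f(e)$-th column of $M$). By (C1) and (M1)–(M2), after a change of basis I may assume the columns $v_{\tilde{e}_k}$ are exactly $\epsilon_{k,m}$; more precisely, since $f(\mathcal{X})$ is independent and by (C1) these $K$ columns already equal $\begin{bsmallmatrix}\mathrm{I}_{K\times K}\\ \mathbf{0}\end{bsmallmatrix}$, I project onto the first $K$ coordinates. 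Then I would define the global encoding vector of each edge as the first $K$ coordinates of $v_e$, i.e.\ $F_e:=\Pi v_e$ where $\Pi$ keeps rows $1,\ldots,K$; for source edges this gives $\epsilon_{k,K}$ and by (C2) for demand edges it gives $g_t$. The key point is to check this is a valid linear network code: for each node $v$, (M3) gives $r(f(In(v)))=r(f(In(v)\cup Out(v)))$, hence each column $v_e$, $e\in Out(v)$, lies in the span of $\{v_{e'}:e'\in In(v)\}$, so after projection $F_e$ is the same linear combination of $\{F_{e'}:e'\in In(v)\}$ — this defines consistent local encoding kernels by induction along an ancestral ordering, and since the projected source columns still span $\mathbb{F}_q^K$ the whole construction is consistent. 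The decoder at $t$ reads $(F_{e'})_{e'\in In(t)}$ and, because $g_t=F_{\hat{e}_t}$ lies in the span of $\{F_{e'}:e'\in In(t)\}$ (again by (M3) applied at $t$, noting $\hat{e}_t\in Out(t)$), recovers $g_t(X)=X\cdot g_t$ linearly.

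\textbf{Main obstacle.} The routine parts are the rank/span translations; the delicate point is the basis-change/projection step in sufficiency: one must argue that restricting the $m$-dimensional representation to the $K$ coordinates singled out by (C1) does not destroy any of the span relations needed for (M3) and for decoding, and that the source columns remain a basis of the target space $\mathbb{F}_q^K$. This works precisely because (C1) forces the bottom $m-K$ rows of the source columns to vanish while the top $K\times K$ block is the identity, so the projection $\Pi$ restricted to the source columns is invertible; a short linear-algebra lemma stating ``if $W\subseteq\mathbb{F}_q^m$ contains a subset of vectors whose projection to coordinates $[K]$ is a basis of $\mathbb{F}_q^K$ and which spans a direct summand complementary to $\ker\Pi$, then $\Pi$ preserves all linear dependencies among vectors of $W$'' is what I would isolate and invoke. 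I would also need to remark that enlarging $m$ is harmless (zero-padding a representation yields a representation of the same matroid), so the statement's quantifier ``at least one representation with $m\ge K$'' is consistent between the two directions.
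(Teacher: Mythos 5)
Your proof is correct and follows essentially the same route as the paper: the forward direction builds the vector matroid of the juxtaposed global encoding vectors exactly as the paper does, and the converse assigns the columns $M_{f(e)}$ as encoding vectors and invokes (M3) to get the required span relations at each node and each sink. The only cosmetic difference is in handling $m>K$: the paper introduces $m-K$ dummy source messages and deletes the corresponding rows at the end, whereas you project directly onto the first $K$ coordinates --- these are equivalent, and the ``main obstacle'' you isolate is in fact automatic, since any linear projection preserves linear dependence relations (only independence could be destroyed, and that is never needed here).
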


\begin{figure*}
\tiny
\begin{align}
M= \bordermatrix{&F_{\tilde{e}_1}&F_{\tilde{e}_2}&F_{\tilde{e}_3}& F_{\tilde{e}_4}& F_{e_1} &F_{e_2} &F_{e_3}&F_{e_4}&F_{e_5}&F_{e_6}&F_{e_7}&F_{e_8}&F_{e_9}&F_{e_{10}}&F_{e_{11}}&F_{e_{12}}&F_{e_{13}}&F_{e_{14}}&F_{e_{15}}&F_{e_{16}}&F_{\hat{e}_1}&F_{\hat{e}_2}&F_{\hat{e}_3}& F_{\hat{e}_4}\cr
& 1 & 0 & 0 & 0 & 1 & 1 & 0 & 0 & 1 & 0 & 0 & 1 & 0 & 0 & 1 & 1 & 0 & 1 & 1 & 0 & 1 & 1 & 0 & 0\cr
& 0 & 1 & 0 & 0 & 0 & 1 & 0 & 0 & 0 & 0 & 0 & 0 & 0 & 0 & 0 & 1 & 0 & 0 & 1 & 0 & 0 & 0 & 1 & 1\cr
& 0 & 0 & 1 & 0 & 0 & 0 & 1 & 1 & 0 & 0 & 1 & 0 & 1 & 1 & 0 & 0 & 1 & 0 & 1 & 1 & 1 & 0 & 1 & 0\cr
& 0 & 0 & 0 & 1 & 0 & 0 & 1 & 0 & 0 & 0 & 0 & 0 & 1 & 0 & 0 & 0 & 0 & 0 & 1 & 0 & 0 & 1 & 0 & 1}
\label{eq_nc2mat}
\end{align}
\vspace{-10pt}
\end{figure*}

\begin{figure*}
\tiny
\begin{align}
M'= \bordermatrix{&1&2&3& 4& 5 &6 &7&8&9&10&11&12&13&14&15&16&17&18&19&20&21&22&23&24\cr
& 1 & 0 & 0 & 0 & 1 & 1 & 0 & 0 & 1 & 1 & 0 & 1 & 1 & 0 & 1 & 2 & 0 & 1 & 2 & 0 & 1 & 1 & 0 & 0\cr
& 0 & 1 & 0 & 0 & 1 & 2 & 0 & 0 & 1 & 2 & 0 & 1 & 2 & 0 & 1 & 1 & 0 & 1 & 1 & 0 & 0 & 0 & 1 & 1\cr
& 0 & 0 & 1 & 0 & 0 & 0 & 1 & 1 & 0 & 1 & 1 & 0 & 2 & 1 & 0 & 1 & 1 & 0 & 2 & 1 & 1 & 0 & 1 & 0\cr
& 0 & 0 & 0 & 1 & 0 & 0 & 2 & 1 & 0 & 2 & 1 & 0 & 1 & 1 & 0 & 2 & 1 & 0 & 1 & 1 & 0 & 1 & 0 & 1}
\label{eq_mat2nc}
\end{align}
\vspace{-10pt}
\end{figure*}

\begin{proof} Let $\tilde{e}_1,\ldots,\tilde{e}_K,e_1,\ldots,e_{|E|},\hat{e_1},\ldots,\hat{e}_{|T|}$ be an ancestral ordering of edges in the network. We first show that a scalar linear solution for $\mathcal{F}$ describes a representable matroid with respect to which $\mathcal{N}$ is matroidal.

\noindent \textit{Scalar linear solution implies $\mathcal{N}$ is matroidal with respect to a representable matroid:} Let $\{F_e:e\in \mathcal{E}\}$ and $\{D_t:t\in {T}\}$, respectively, be the set of global encoding vectors and decoding functions of a scalar linear network code for $\mathcal{F}$ over a field $\mathbb{F}_q$. Let $M$ be a $K\times |\mathcal{E}|$ matrix over $\mathbb{F}_q$ formed by juxtaposing the global encoding vectors of all the edges in $\mathcal{E}$, i.e.,
\begin{align*}
M=& [ F_{\tilde{e}_1}\,\cdots\,F_{\tilde{e}_K}  \, &| \,F_{e_1}\,\cdots\,F_{e_{|E|}}  \, &| \, F_{\hat{e}_1}\,\cdots\,F_{\hat{e}_{|T|}} ] \qquad
\\
=& [ \mathrm{I}_{K\times K} \, &| \, F_{e_1}\,\cdots\,F_{e_{|E|}} \, &| \, g_{t_1}\,\cdots\,g_{t_{|T|}} ].\qquad
\end{align*}
Let $S=\{1,2,\ldots,|\mathcal{E}|\}$ and $\mathcal{M}$ be the vector matroid of $M$ and $r$ be its rank function. Let $f:\mathcal{X}\cup E \cup G_T\rightarrow S$ be defined as follows: 
\begin{align*}
f(X_k)&=k,     & k=1,2,\ldots,K;\\
f(e_i)&= K+i,  & i=1,2,\ldots,|E|;\\
f(g_{t_j})&=K+|E|+j,  & j=1,2,\ldots,|T|.
\end{align*}
We will verify that $f$ satisfies (M1)-(M3) so that $\mathcal{N}$ is matroidal with respect to the vector matroid of $M$.

The function $f$ is one-to-one on $\mathcal{X}$ (distinct elements are assigned to each message) thus satisfying (M1). It also satisfies (M2) since $f(\{X_1,X_2,\ldots,X_K\})=\{1,2,\ldots,K\}$ and $r(\{1,2,\ldots,K\})=K$ since first $K$ columns of $M$ are linearly independent. Let $v$ be an arbitrary node in the network and let $e\in \mathcal{E}$ be an outgoing edge of $v$, i.e., $e\in Out(v)$. Then, $F_e$ is a linear combination of the global encoding vectors $F_{e'}$, $e'\in In(v)$ (by Remark~\ref{rem_1}), and consequently $r(f(In(v)\cup Out(v)))=\mathrm{rank}\left[(F_{e'})_{e'\in In(v)},(F_{e})_{e\in Out(v)}\right]=\mathrm{rank}\left[(F_{e'})_{e'\in In(v)}\right]=r(f(In(v)))$. Here we have used the fact that $In(v)$ is the set containing the messages generated by $v$ and incoming links from other nodes and hence $f(In(v))$ is a subset of $S$ containing ground set elements corresponding to the messages generated by $v$ and those corresponding to incoming links to $v$. The vectors $F_{e'},\;e'\in In(v)$, are assigned to these elements. Similarly, vectors $F_{e},\;e\in Out(v)$ are assigned to the ground set elements in $f(Out(v))$, where $Out(v)$ includes the demand and outgoing links of $v$. Thus, $f$ satisfies (M3). Clearly $M$ satisfies (C1) and (C2) with $m=K$ and $n=|\mathcal{E}|$.

We now prove the converse.

\noindent \textit{$\mathcal{N}$ is matroidal with respect to a representable matroid implies $\mathcal{F}$ admits a scalar linear solution:} Let $\mathcal{N}$ be matroidal with respect to a representable matroid $\mathcal{M}$ and $M\in \mathbb{F}_q^{m\times n}$, $m\geqslant K$ and $n\geqslant m$ be a representation that satisfies (C1) and (C2). Since $M$ is a representation of $\mathcal{M}$, $r(S)=\mathrm{rank}(M)$ (since a set of column indices is independent if and only if the corresponding columns are linearly independent). Let $r(S)=\mathrm{rank}(M)=m$; if not then redundant rows can be dropped without changing the dependencies or independencies of the matroid. Since the network is matroidal, there exists a network-matroid map $f:\mathcal{X}\cup E \cup G_T\rightarrow S$, where $S=[n]$, that satisfies (M1)-(M3). Without loss of generality, let columns $1$ through $K$ of $M$ be of the form $\begin{bsmallmatrix} \mathrm{I}_{K\times K}\\ \mathbf{0}_{m-K\times K} \end{bsmallmatrix}$ and $f(X_k)=k$ for $k=1,2,\ldots,K$. Let columns $f(g_{t_1}),f(g_{t_2}),\ldots,f\left(g_{t_{|T|}}\right)$ of $M$ form the submatrix $\begin{bsmallmatrix}g_{t_1}\,g_{t_2}\,\ldots\,g_{t_{|T|}}\\ \mathbf{0}_{m-K\times |T|}\end{bsmallmatrix}$. 

Add a dummy source node that is not connected to any node in the network and generates dummy source messages $X_{K+1},\ldots,X_m$. The modified network has $m$ source messages and hence the length of global encoding vectors will also be $m$. Taking into account the dummy source messages, the global encoding vector of an edge $\hat{e}_{t_j}$, $t_j\in T$, will now be $F_{\hat{e}_{t_j}}=\begin{bsmallmatrix}g_{t_j}\\ \mathbf{0}_{m-K\times 1}\end{bsmallmatrix}=M_{f(g_{t_j})}$. Assign the global encoding vectors to the edges $e\in \mathcal{E}$ as follows:
\begin{align*}
F_{\tilde{e}_k}&=M_{f(X_k)}=M_k,     & k=1,2,\ldots,K;\\
F_{e_i}&= M_{f(e_i)},  & i=1,2,\ldots,|E|.
\end{align*}
We will verify that the above choice of global encoding vectors satisfies all the sink demands by obtaining a decoding map $D_t$ for each sink $t\in T$. 

Since $f$ satisfies (M3), $r(f(In(v)))=r(f(In(v)\cup Out(v)))$ for every $v\in V$. That is, for each node $v\in V$ and each edge $e\in Out(v)\cap E$ (i.e., edges from $v$ to other nodes in the networks and not representing demands), $M_{f(e)}$ is a linear combination of $\{M_{f(e')}$, $e'\in In(v)\}$. Consequently, by the above choice of encoding vectors, $F_e$ is a linear combination of $\{F_{e'}$, $e'\in In(v)\}$. 

For a sink node $t\in T$, $M_{f(g_t)}$, the global encoding vector of $\hat{e}_t$, is a linear combination of $M_{f(e')}$, $e'\in In(v)$, because the edge $\hat{e}_t$ is in $Out(t)$ and $r(f(In(t)))=r(f(In(t)\cup Out(t)))$ by (M3). Thus, $F_{\hat{e}_{t_j}}=M_{f(g_t)}$ is a linear combination of $F_{e'}$, $e'\in In(t)$. That is, there exists a columns vector $D_t$ of length $|In(t)|$ such that $F_{\hat{e}_{t_j}}=M_{f(g_t)}=\begin{bsmallmatrix}g_{t_j}\\ \mathbf{0}_{m-K\times 1}\end{bsmallmatrix}=(F_{e'})_{e'\in In(t)}\cdot D_t$ (Remark~\ref{rem_1}(d)). The vector $D_t$ is the decoding vector for sink node $t$. Removing the dummy messages and deleting the last $m-K$ rows of global encoding maps defined above, we get a network code, i.e., a global encoding vector of length $K$ for each edge in $E$ and a decoding vector for each sink in $T$, for the network coding problem $\mathcal{F}$ over the network $\mathcal{N}$ (without the dummy source node and with only $K$ messages). 

Thus, a representation, satisfying (C1) and (C2), of a matroid with respect to which $\mathcal{N}$ is matroidal gives a scalar linear solution for the network computation problem $\mathcal{F}$. 
\end{proof}

%

\begin{example}
Consider the network computation problem given in Fig.~\ref{fig_fncmat}. 
\begin{figure}[h]
\vspace{-10pt}
\centering
\includegraphics[scale=0.52]{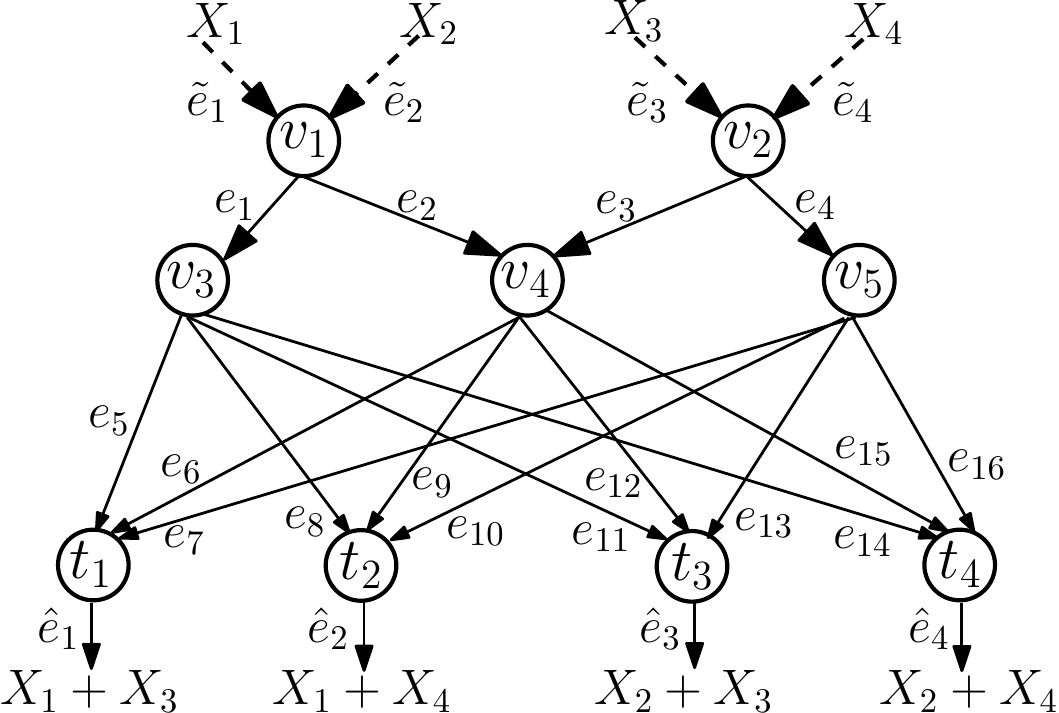}
\caption{A network coding problem.}
\label{fig_fncmat}
\vspace{-8pt}
\end{figure}

The global encoding vectors of a linear network code over $\mathbb{F}_2$ are given in \eqref{eq_nc2mat}. Decoding takes place as follows:

\vspace{-12pt}
\begin{footnotesize}
\begin{align*}
g_{t_1}(X)&=Y_{e_5}+Y_{e_7}=X(F_{e_5}+F_{e_7})=X_1+X_3;\\
g_{t_2}(X)&=Y_{e_8}+Y_{e_9}+Y_{e_{10}}=X(F_{e_8}+F_{e_9}+F_{e_{10}})=X_1+X_4;\\
g_{t_3}(X)&=Y_{e_{11}}+Y_{e_{12}}+Y_{e_{13}}=X(F_{e_{11}}+F_{e_{12}}+F_{e_{13}})=X_2+X_3;\\
g_{t_4}(X)&=Y_{e_{14}}+Y_{e_{15}}+Y_{e_{16}}=X(F_{e_{14}}+F_{e_{15}}+F_{e_{16}})=X_2+X_4.
\end{align*}
\end{footnotesize}
\vspace{-12pt}

Let $\mathcal{M}(M)$ be the vector matroid of matrix $M$ given in \eqref{eq_nc2mat} with ground set $S=[24]$. Let the network-matroid map $f$ be defined as follows:
\begin{align}
\nonumber f(X_k)&=k, \quad &k=1,2,3,4;\\
f(e_i)&=4+i, \quad &i=1,2,\ldots,16; \label{eq_map} \\
\nonumber f(g_{t_j})&=20+j, \quad &j=1,2,3,4.
\end{align}
Then, it can be verified that $f$ satisfies (M1)-(M3) and the network is matroidal with respect to the representable matroid $\mathcal{M}(M)$. \hfill $\square$
\end{example}

\begin{example}
Let $\mathcal{M}(M')$ be the vector matroid of matrix $M'$ over $\mathbb{F}_3$ given in \eqref{eq_mat2nc} with ground set $S=[24]$. The network in Fig.~\ref{fig_fncmat} is matroidal with respect to $\mathcal{M}(M')$ with the network-matroid map $f$ defined as in \eqref{eq_map} and the matrix $M'$ satisfies (C1) and (C2). Then, by Theorem~\ref{thm1}, the following choice of global encoding kernels gives a network code over $\mathbb{F}_3$ for the network computation problem of Fig.\ref{fig_fncmat}: 
\begin{align*}
F_{\tilde{e}_k}&=M'_{f(X_k)}=M'_k,     & k=1,2,3,4;\\
F_{e_i}&= M'_{f(e_i)},  & i=1,2,\ldots,16.
\end{align*}
The decoding is performed as follows:
\begin{footnotesize}
\begin{align*}
g_{t_1}(X)&=Y_{e_5}+Y_{e_6}+Y_{e_7}=X(F_{e_5}+F_{e_6}+F_{e_7})=X_1+X_3;\\
g_{t_2}(X)&=Y_{e_8}+Y_{e_9}+Y_{e_{10}}=X(F_{e_8}+F_{e_9}+F_{e_{10}})=X_1+X_4;\\
g_{t_3}(X)&=Y_{e_{11}}+Y_{e_{12}}+Y_{e_{13}}=X(F_{e_{11}}+F_{e_{12}}+F_{e_{13}})=X_2+X_3;\\
g_{t_4}(X)&=Y_{e_{14}}+Y_{e_{15}}+Y_{e_{16}}=X(F_{e_{14}}+F_{e_{15}}+F_{e_{16}})=X_2+X_4.
\end{align*}
\end{footnotesize}
Thus, a matroid representation satisfying (C1) and (C2) gives a scalar linear network code. \hfill $\square$
\end{example}

\section{Network Computation and FD-Relations} \label{sec_fd}
In the preceding section, we established connections between network computation problems and matroid representation. There are two issues with this connection. Firstly, the network does not dictate that every set of edges carry dependent or independent information; only the source message, and hence the corresponding tailless edges, need to be independent and the data on outgoing edges of each node be completely determined by the data on its incoming edges. Thus, only the ground set elements corresponding to source messages need to be independent, which is reflected by (M2), and the set of elements corresponding to incoming and outgoing edges of every node be a dependent set, which is reflected by (M3). But the matroid will have additional dependencies and independencies since every subset of the ground set should be either dependent or independent. Secondly, this connection is limited to linear network codes for network computation problems with linear sink demands. 

The issues specified above can be circumvented by using a FD-relation to characterize a network. FD-relations were defined by \cite{FD} and matroids were shown to be a special case of FD-relation \cite[eq.~(4)]{FD}. FD-relations find application in relational database theory. Connections between FD-relations and network coding problems were explored in \cite[Sec.~VI]{FDNC} to circumvent the problem of additional dependencies and independencies that arise in the matroid-network coding problems connection.

\begin{definition}
Let $N$ be a finite set and let $\mathcal{Q}(N)$ denote the set of all ordered pair of subsets of $N$, i.e., $\mathcal{Q}(N)=\{(I,J):I,J\subseteq N\}=2^N\times 2^N$. Then, $Q\subseteq \mathcal{Q}(N)$ is said to be a FD-relation on $N$ if and only if it satisfies the following conditions:
\begin{enumerate}[(FD1)]
\item If $I\subseteq J \subseteq N$, then $(I,J)\in Q$;
\item If $(I,J)\in Q$ and $(J,K)\in Q$, then $(I,K)\in Q$;
\item If $(I,J)\in Q$ and $(J,K)\in Q$, then $(I,J\cup K)\in Q$.
\end{enumerate}
\end{definition}
\noindent For any $(I,J)\in Q$, $J$ is said to \textit{depend functionally} on $I$. 

Representation of FD-relations was also studied in \cite{FD}. Let, for each $i\in N$, $\Phi_i$ be a map from a nonempty set $B$ to a nonempty set $C$, i.e., $\Phi_i:B\rightarrow C$. For a subset $I$ of $N$, define $\Phi_I:B\rightarrow C^{|I|}$ by $f_I(b)=(\Phi_i(b))_{i\in I}\in C^{|I|}$ (with $C^{|I|}=\{\emptyset\}$ for $I=\emptyset$). Then, the set $\{\Phi_i:i\in N\}$ forms a \textit{functional representation} \cite[Sec.~VI]{FDNC},\cite[Example~3 and Remark~3]{FD} of $Q$ if and only if for every $(I,J)\in Q$, there exists a function $\Psi_I^J:C^{|I|}\rightarrow C^{|J|}$ such that
\begin{align} \label{eq_fdrep}
\Phi_J=\Psi_I^J\circ \Phi_I.
\end{align}
In other words, a functional representation of $Q$ is an assignment of functions from a set $B$ to a set $C$ to the elements of $N$ that satisfy \eqref{eq_fdrep}.

Let $\mathcal{F}(\mathcal{N}(V,\mathcal{E}),\mathcal{X},G_T)$ network computation problem with possibly nonlinear sink demands and $K$ messages. For a node $v$, let $Out'(v)=Out(v)\backslash \hat{E}$, i.e, unlike $Out(v)$, $Out'(v)$ does not include the headless edges denoting the demand (if any) of node $v$. Let an FD-relation $Q_\mathcal{E}$ on the set of edges $\mathcal{E}$  be
\begin{align} \label{eq_fdrel}
Q_\mathcal{E}\!=\!\{(In(v),Out'(v))\!:\!v\in V\}\cup\{(In(t),\hat{e}_t)\!:\!t\in T\}.
\end{align}
This FD-relation reflects exactly the dependencies of the network and contains no additional dependencies. 

\begin{remark} Any functional representation of $Q_\mathcal{E}$ is an assignment of functions to the edges.
\begin{enumerate}[(a)]
\item For an edge $e\in \mathcal{E}$, the function $\Phi_e$ is the global encoding kernel of $e$.
\item For a node $v\in V$, let $I=In(v)$ and $J=Out'(v)$. Then, the function $\Psi_I^J$ is the set of local encoding kernels of the outgoing edges of $v$ (from \eqref{eq_fdrep} and \eqref{eq_f_2_F}).
\item For a sink $t\in T$, let $I=In(t)$ and $J=\hat{e}_t$. Then, the function $\Psi_I^J$ is the decoding function of sink $t$ (from \eqref{eq_fdrep} and \eqref{eq_dec_y}).
\end{enumerate}
\end{remark}

We have the following result.

\begin{proposition} \label{prop_1}
The network computation problem $\mathcal{F}$ admits a scalar solution over $\mathbb{F}_q$ if and only if there exists a functional representation $\{\Phi_e:e\in \mathcal{E}\}$ of $Q_\mathcal{E}$ with $B=\mathbb{F}_q^K$ and $C=\mathbb{F}_q$ (i.e., $\Phi_e:\mathbb{F}_q^K\rightarrow \mathbb{F}_q$ for all $e\in \mathcal{E}$) that satisfies the following constraints:
\begin{enumerate}[(C1){$^\prime$}]
\item $\Phi_{\tilde{e}_k}(X)=X_k$ for all $k\in [K]$;
\item $\Phi_{\hat{e}_t}(X)=g_t(X)$ for all $t\in T$.
\end{enumerate}
\end{proposition}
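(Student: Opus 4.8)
The plan is to prove both implications by a direct translation between scalar network codes for $\mathcal{F}$ and functional representations of $Q_\mathcal{E}$ obeying (C1$'$)--(C2$'$), using exactly the dictionary recorded in the remark preceding the statement: a global encoding kernel is one of the maps $\Phi_e$, the local encoding kernels at a node $v$ assemble into the map $\Psi_{In(v)}^{Out'(v)}$ of \eqref{eq_fdrep}, and the decoding map of a sink $t$ is the map $\Psi_{In(t)}^{\{\hat e_t\}}$. Throughout I would fix an ancestral ordering $\tilde e_1,\dots,\tilde e_K,e_1,\dots,e_{|E|},\hat e_1,\dots,\hat e_{|T|}$ of $\mathcal{E}$, which exists because $\mathcal{N}$ is acyclic and along which global encoding kernels are generated from local ones by \eqref{eq_f_2_F}.

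For necessity, given a scalar solution $\{f_e:e\in E\}\cup\{D_t:t\in T\}$ with induced global encoding kernels $\{F_e:e\in\mathcal{E}\}$, I would simply set $\Phi_e:=F_e$ for every $e\in\mathcal{E}$; then $B=\mathbb{F}_q^K$, $C=\mathbb{F}_q$, and (C1$'$)--(C2$'$) hold because $F_{\tilde e_k}(X)=X_k$ and $F_{\hat e_t}(X)=g_t(X)$. To check this is a functional representation it is enough to exhibit a valid $\Psi$ for each pair listed in \eqref{eq_fdrel}: for a node $v$ take $\Psi_{In(v)}^{Out'(v)}:=(f_e)_{e\in Out'(v)}$, and note that $\Phi_{Out'(v)}=\Psi_{In(v)}^{Out'(v)}\circ\Phi_{In(v)}$ follows from \eqref{eq_f_2_F} together with $In(e)=In(v)$ for $e\in Out'(v)$; for a sink $t$ take $\Psi_{In(t)}^{\{\hat e_t\}}:=D_t$, and observe that \eqref{eq_dec_y} is precisely $\Phi_{\hat e_t}=\Psi_{In(t)}^{\{\hat e_t\}}\circ\Phi_{In(t)}$. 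This yields the required representation.

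For sufficiency, starting from a functional representation $\{\Phi_e:e\in\mathcal{E}\}$ with $B=\mathbb{F}_q^K$, $C=\mathbb{F}_q$ satisfying (C1$'$)--(C2$'$), I would, for each node $v$, invoke \eqref{eq_fdrep} on $(In(v),Out'(v))\in Q_\mathcal{E}$ to get $\Psi_{In(v)}^{Out'(v)}$ with $\Phi_{Out'(v)}=\Psi_{In(v)}^{Out'(v)}\circ\Phi_{In(v)}$, and define the local kernel $f_e$, for $e\in Out'(v)$, to be the coordinate of $\Psi_{In(v)}^{Out'(v)}$ indexed by $e$ --- legitimate because $In(e)=In(v)$. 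A routine induction along the ancestral ordering (base case $F_{\tilde e_k}(X)=X_k=\Phi_{\tilde e_k}(X)$ by (C1$'$); inductive step via \eqref{eq_f_2_F} and the defining identity of $\Psi$ at the tail of $e$) then shows that the global encoding kernels induced by $\{f_e:e\in E\}$ agree with the $\Phi_e$. For each sink $t$ I would invoke \eqref{eq_fdrep} on $(In(t),\hat e_t)$ to obtain $\Psi_{In(t)}^{\{\hat e_t\}}$ and set $D_t:=\Psi_{In(t)}^{\{\hat e_t\}}$; then $D_t\big((F_{e'}(X))_{e'\in In(t)}\big)=\Phi_{\hat e_t}(X)=g_t(X)$ by (C2$'$), so \eqref{eq_dec_y} holds and $\{f_e:e\in E\}\cup\{D_t:t\in T\}$ is the desired scalar solution for $\mathcal{F}$ over $\mathbb{F}_q$.

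Since both directions are essentially bookkeeping, I do not anticipate a deep obstacle. The two places that call for care are: first, justifying that it suffices to handle the pairs explicitly listed in \eqref{eq_fdrel} rather than the entire FD-relation they generate --- here one uses that functional representability is inherited through the FD-closure operations (a representing map for $(I,K)$ is obtained by composing those for $(I,J)$ and $(J,K)$, one for $(I,J\cup K)$ by additionally tupling), so no ``hidden'' dependency can force extra structure on the $\Phi_e$; and second, the coordinate-wise matching of $\Psi_{In(v)}^{Out'(v)}$ with the individual local encoding kernels $f_e$, which hinges on $In(e)=In(v)$ for every edge $e$ leaving $v$ and is precisely what makes the ancestral-ordering induction in the sufficiency direction go through.
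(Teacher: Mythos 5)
Your proposal is correct and follows essentially the same route as the paper's proof: both directions translate directly between the network code data $(F_e,f_e,D_t)$ and the representation data $(\Phi_e,\Psi_I^J)$ via $\Phi_e=F_e$, $\Psi_{In(v)}^{Out'(v)}=(f_e)_{e\in Out'(v)}$, and $\Psi_{In(t)}^{\{\hat e_t\}}=D_t$. Your two added points of care --- the explicit induction along the ancestral ordering in the sufficiency direction, and the observation that functional representability is preserved under the FD-closure operations so only the generating pairs in \eqref{eq_fdrel} need checking --- are sound and in fact tighten steps the paper leaves implicit.
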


\begin{proof}
Let $\{F_e:e\in {E}\}$, $\{f_e:e\in E\}$, and $\{D_t:t\in {T}\}$, respectively, be the set of global encoding vectors, local encoding vectors, and decoding functions of a scalar network code for $\mathcal{F}$ over $\mathbb{F}_q$. Let $F_{\tilde{e}_k}=X_k$ for all $k\in [K]$ and $F_{\hat{e}_{t}}=g_{t}(X)$ for all $t\in T$. Using \eqref{eq_f_2_F}, we have for a vertex $v\in V$
\begin{align*}
(F_e(X))_{e\in Out'(v)} = \left( f_e((F_{e'}(X))_{e\in In(v)})  \right)_{e\in Out'(v)},
\end{align*}
i.e., $(f_e)_{e\in Out'(v)}:\mathbb{F}_q^{|In(v)|}\rightarrow \mathbb{F}_q^{|Out'(v)|}$. And, for a sink $t$
\begin{align*}
D_t\left( (F_{e'}(X))_{e\in In(t)} \right) = g_t(X).
\end{align*}

Let $\Phi_e=F_e$ for all $e\in \mathcal{E}$. For a node $v\in V$, let $I=In(v)$, $J=Out'(v)$, and $\Psi_I^J=(f_e)_{e\in J}$. For a sink $t\in T$, let $I=In(t)$, $J=\hat{e}_t$, and $\Psi_I^J=D_t$. Then, $\{\Phi_e:e\in \mathcal{E}\}$ forms a functional representation of $Q_\mathcal{E}$ with $B=\mathbb{F}_q^K$, $C=\mathbb{F}_q$ and the function $\Phi_e$s and $\Psi_I^J$s satisfying \eqref{eq_fdrep}.

To prove the converse, assume that a functional representation of $Q_\mathcal{E}$ that satisfies (C1){$^\prime$} and (C2){$^\prime$} is given. That is, $\{\Phi_e:e\in \mathcal{E}\}$, $\{\Psi_I^J:I=In(v),J=Out'(v),v\in V\}$, and $\{\Psi_I^J:I=In(t),J=\hat{e}_t,t\in T\}$ are given and satisfy (C1){$^\prime$} and (C2){$^\prime$}. For each $t\in T$, the network computation problem specifies $F_{\hat{e}_t}=g_t$ which is equal to $\Phi_{\hat{e}_t}$ (by (C2){$^\prime$}). To each edge $e\in \tilde{E}\cup E$, assign $\Phi_e$ as the global encoding kernel, i.e., $F_e=\Phi_e$. For each $v\in V$, $I=In(v)$, and $J=Out'(v)$, let $(f_e)_{e\in J}=\Psi_I^J$. For each $t\in T$, $I=In(t)$, and $J=\hat{e}_t$, let $D_t=\Psi_I^J$. We will verify that this choice of encoding kernels and decoding maps satisfies all the sink demands. 

For each node $v\in V$, $J=Out'(v)$, and $I=In(v)$, since $\Phi_I=\Psi_I^J\circ \Phi_I$, by the above choice of encoding kernels we have $(F_e(X))_{e\in J}=f_e((F_{e'}(X))_{e'\in I})_{e\in J}$, i.e., the global encoding maps of the outgoing edges are functions of those of the incoming edges. For each sink node $t\in T$, $I=In(t)$, and $J=\hat{e}_t$, since $\Phi_{\hat{e}_t}=\Psi_I^J\circ \Phi_I$, the above choice of encoding kernels and decoding maps ensures that $g_t(X)=D_t((F_{e'}(X))_{e'\in I})$, i.e., each sink can obtain its demanded function value. 

Thus, a network code that satisfies all the sink demands can be obtained from a functional representation of $Q_\mathcal{E}$ satisfying (C1){$^\prime$} and (C2){$^\prime$}.
\end{proof}

Recall that matroid representation involved assignment of vectors to ground set elements. And, via the network-matroid map (proof of Theorem\ref{thm1}), edges are assigned global encoding vector. But representation of FD-relation does not pose any linearity constraint on the function assigned to edges. If the functions $\Phi_e$, $e\in \mathcal{E}$, in the representation of $Q_\mathcal{E}$ are nonlinear, then the resulting network code will also be nonlinear. Thus, FD-relations can be used to characterize network computation problems with nonlinear demands and their representation can potentially give nonlinear codes also.

\begin{example}
Consider the network computation problem given in Fig.~\ref{fig_fnc_max}. There are $11$ source nodes, labeled $1,2,\ldots,11$, each generating a $10$-bit long message. There is only one sink which wants to compute the maximum among the decimal equivalents of messages. All dashed and solid edges have same capacity (say $b$ bits). By $\max$, we mean the maximum of the decimal equivalent of the $b$-bit words. The local encoding kernels are given adjacent to the edges.
\begin{figure}
\centering
\includegraphics[scale=0.5]{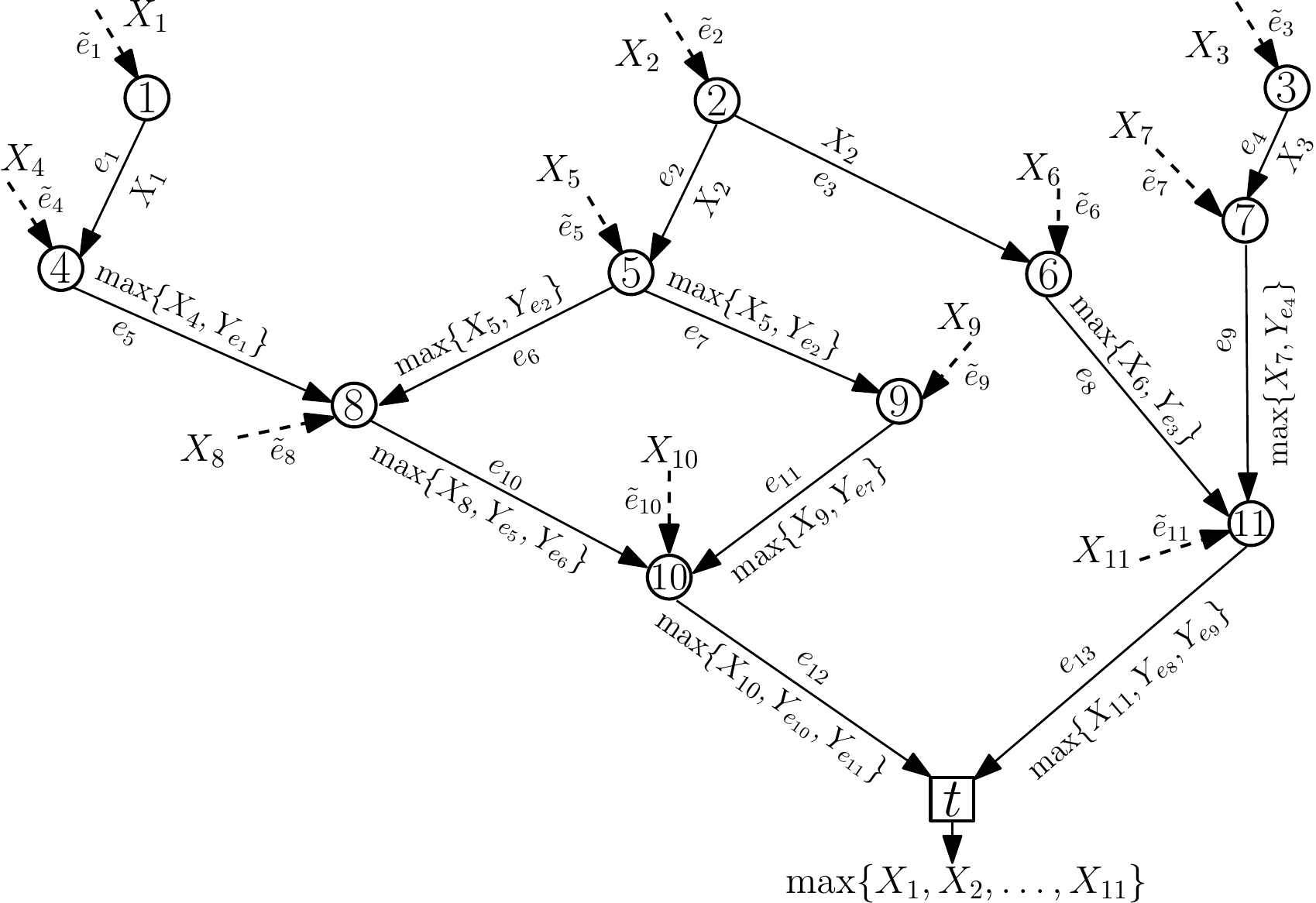}
\caption{A function computation problem.}
\label{fig_fnc_max}
\vspace{-10pt}
\end{figure}
The local and global encoding kernels of edges $e_1,e_2,\ldots,e_{13}$ are given in Table~\ref{tab_LEKGEK}. 
\begin{table}[h] 
\caption{}
\label{tab_LEKGEK}
\vspace{-6pt}
\tiny
\centering
\renewcommand{\tabcolsep}{4pt}
\renewcommand{\arraystretch}{1.5}
\begin{tabular}{|c|c|c|}
\hline
$Y_e$ & $f_e\left((Y_{e'})_{e'\in In(e)}\right)$& $F_e(X_1,\ldots,X_{11})$ \\ \hline \hline
$Y_{e_1}$ & $X_1$& $X_1$ \\ \hline
$Y_{e_2}$ & $X_2$& $X_2$ \\ \hline
$Y_{e_3}$ & $X_2$& $X_2$ \\ \hline
$Y_{e_4}$ & $X_3$& $X_3$ \\ \hline
$Y_{e_5}$ & $\max\{Y_{e_1},X_4\}$ & $\max\{X_1,X_4\}$\\ \hline
$Y_{e_6}$ & $\max\{Y_{e_2},X_5\}$ & $\max\{X_2,X_5\}$\\ \hline
$Y_{e_7}$ & $\max\{Y_{e_2},X_5\}$ & $\max\{X_2,X_5\}$\\ \hline
$Y_{e_8}$ & $\max\{Y_{e_3},X_6\}$ & $\max\{X_2,X_6\}$\\ \hline
$Y_{e_9}$ & $\max\{Y_{e_4},X_7\}$ & $\max\{X_3,X_7\}$\\ \hline
$Y_{e_{10}}$ & $\max\{Y_{e_5},Y_{e_6},X_8\}$&$\max\{X_1,X_2,X_4,X_5,X_8\}$\\ \hline
$Y_{e_{11}}$ & $\max\{Y_{e_7},X_9\}$& $\max\{X_2,X_5,X_9\}$\\ \hline
$Y_{e_{12}}$ & $\max\{Y_{e_{10}},Y_{e_{11}},X_{10}\}$& $\max\{X_1,X_2,X_4,X_5,X_8,X_9,X_{10}\}$\\ \hline
$Y_{e_{13}}$ & $\max\{Y_{e_8},Y_{e_9},X_{11}\}$& $\max\{X_2,X_3,X_6,X_7,X_{11}\}$\\ \hline
\end{tabular}
\vspace{-6pt}
\end{table}
\noindent The decoding function for sink $t$ is
\begin{align*}
D_t(Y_{e_{12}},Y_{e_{13}})\!=\!\max\{Y_{e_{12}},Y_{e_{13}}\}\!=\!\max\{X_1,X_2,\ldots,X_{11}\}.
\end{align*}

The global encoding kernels (third columns in Table~\ref{tab_LEKGEK}) form a representation (that satisfies the constraints specified in Proposition~2) for the FD-relation (defined using \eqref{eq_fdrel}) for this network computation problem. For instance, at node $8$ in Fig.~\ref{fig_fnc_max}, the sets of incoming and outgoing edges are $I=\{\tilde{e}_{8},e_5,e_6\}$ and $J=\{e_{10}\}$ respectively, and $\Phi_J(X)=F_{e_{10}}(X)=\max\{X_1,X_2,X_4,X_5,X_8\}$, $\Phi_I(X)=(F_{\tilde{e}_8}(X), F_{e_5}(X), F_{e_6}(X))=(X_8,\max\{X_1,X_4\},$ $\max\{X_2,X_5\})$, and $\Psi_I^J=f_{e_{10}}=\max$ (outputs the maximum among its $|I|=3$ input arguments). Hence, $(\Psi_I^J\!\circ\! \Phi_I)(X)=\Psi_I^J(\Phi_I(X))=\max\{F_{\tilde{e}_8}(X), F_{e_5}(X), F_{e_6}(X)\}=\max\{X_8,$ $\max\{X_1,X_4\},\max\{X_2,X_5\}\}=F_{e_{10}}(X)=\Phi_J(X)$. Similarly, at the sink node $t$, $I=\{e_{12},e_{13}\}$, $J=\{\hat{e}_t\}$, $\Phi_J(X)=\max\{X_1,X_2,\ldots,X_{11}\}$, $\Phi_I(X)=(F_{e_{12}}(X), F_{e_{13}}(X))=(\max\{X_1,X_2,X_4,X_5,X_8,X_9,X_{10}\},\max\{X_2,X_3,X_6,X_7,$ $X_{11}\})$ , and $\Psi_I^J=D_t=\max$ (outputs the maximum among its $|I|=2$ input arguments). Hence, $\Psi_I^J(\Phi_I(X))=\max\{F_{e_{12}}(X),F_{e_{13}}(X)\}=F_{\hat{e}_t}(X)=\Phi_J(X)$. \hfill $\square$
\end{example}

\section{Discussion}\label{sec_disc}
In this paper, we established a relationship between network computation problems and representable matroids. The definition of matroidal networks given in \cite{Dough} for network coding problems was modified to fit the requirements of network computation problems; the modifiend definition subsumes the original definition of \cite{Dough} as a special case. We proved that a network computation problem with linear functions demanded by sinks admits a scalar linear solution if and only if it is matroidal with respect to a representable matroid whose representation satisfies certain constraints imposed by the network computation problem. An implication of the proposed work is that results from theory of matroid representability can be used for network computation problems. Lastly, we explored relation between FD-relations and network computation problem and concluded that FD-relations can characterize nonlinear codes also.


\section*{Acknowledgment}
This work was supported partly by the Science and Engineering Research Board (SERB) of Department of Science and Technology (DST), Government of India, through J.C. Bose National Fellowship to B. Sundar Rajan.



\end{document}